\definecolor{seccolor}{rgb}{0,0.1,0.4}
\newtheoremstyle{mystyle}{}{}{\em}{}{\color{seccolor}\bfseries}{.}{ }{}
\theoremstyle{mystyle}
\newtheorem{defn}{Definition}
\newtheorem{thm}{Theorem}
\newtheorem{lmm}[thm]{Lemma}
\newtheorem{cor}[thm]{Corollary}
\newcommand{\parb}[1]{\Phi_{#1}\left(x\right)}
\newcommand{\range}[1]{\mathcal{R}_{#1}}
\author{
Mihir Sahasrabudhe\\
Centre de Vision Num\'{e}rique \\CentraleSup\'{e}lec\\ Universit\'{e} Paris-Saclay \\
91190 Gif-sur-Yvette, France \\
\texttt{mihir.sahasrabudhe@ecp.fr} 
\And
Siddhartha Chandra\\
Centre de Vision Num\'{e}rique \\CentraleSup\'{e}lec\\ Universit\'{e} Paris-Saclay \\
91190 Gif-sur-Yvette, France \\
\texttt{robinchandra19@gmail.com}
}
\title{Proof of Correctness and Time Complexity Analysis of a Maximum Distance Transform Algorithm}
\begin{document}

\maketitle

\begin{abstract}
The distance transform algorithm is popular in computer vision and machine 
learning domains. It is used to minimize quadratic functions over a grid of points.
Felzenszwalb and Huttenlocher 
describe an $O(N)$ algorithm for computing the {\em minimum} distance transform 
for quadratic functions. Their algorithm works by computing the lower 
envelope of a set of parabolas defined on the domain of the function. In this 
work, we describe an {\em average time} $O(N)$ algorithm for ``maximizing'' 
this function by computing the {\em upper envelope} of a set of parabolas.
We study the duality of the minimum and maximum distance transforms, give a 
correctness proof of the algorithm and its runtime, and discuss potential 
applications.
\end{abstract}


\section{Introduction}
The distance transform algorithm is frequently used for solving graph inference 
problems in computer vision, machine learning and other domains, as well as in 
image processing applications. It has gained popularity in recent years with 
the proliferation of works based on deformable part models 
(DPMs)~\citep{dpmOrig,Zhu_facedetection}. 

The {\em minimum} distance transform algorithm is used to minimize functions of 
specific forms on $1D$ or $2D$ grids. In this work, we assume the form of the function to be 
quadratic. This assumption is a design choice made by most deformable part 
models~\citep{dpmOrig,Zhu_facedetection}. More precisely, the minimization 
problem on a $1D$ grid can be expressed as

\begin{equation}
D(x) = \min_{p}  I(p) + \alpha(p-x)^2 + \beta(p-x)\,.
\label{eqn:distancetransform1D}
\end{equation} 

The problem on a $2D$ grid is given by the equation
\begin{equation}
D(x,y) = \min_{p,q}  I(p,q) + \alpha(p-x)^2 + \beta(p-x) + \gamma(q-y)^2 + 
\delta(q-y)\,.\label{eqn:distancetransform2D}
\end{equation}

Conventionally, distance transforms have been used for minimizing functions. 
However, maximizing these functions is an equivalent problem. We 
discuss this duality in section \ref{section:duality}. We define {\em maximum} 
distance transforms by the following equations for the $1D$ and $2D$ cases 
respectively:

\begin{align}
D(x) &= \max_{p}  I(p) + \alpha(p-x)^2 + \beta(p-x)\,;\text{~and} \label{eqn:maxdistancetransform1D} \\
D(x,y) &= \max_{p,q}  I(p,q) + \alpha(p-x)^2 + \beta(p-x) + \gamma(q-y)^2 + 
\delta(q-y)\,. \label{eqn:maxdistancetransform2D}
\end{align}

Kindly note that Euclidean distance transforms are a special case 
($\alpha,\gamma=1; \beta,\delta=0$) of the problems in equations 
\ref{eqn:distancetransform1D}-\ref{eqn:maxdistancetransform2D}. These quadratic functions
can be seen as parabolas centered at the grid points, as is shown in Figure 
\ref{fig:complete}(a). Each grid point $p$ in equation 
\ref{eqn:distancetransform1D} is associated with a parabolic function of the 
following form: 
\begin{equation}
\parb{p} = I(p) + \alpha(p-x)^2 + \beta(p-x)\,.
\label{eqn:parabola}
\end{equation}
Therefore, finding the minimum distance transform can be understood as finding 
the minimum out of these functions at each point $x$ in the domain. In short, 
the minimum distance transform is given by the lower envelope of these 
parabolas. Equivalently, the maximum distance transform is given by the upper 
envelope of these parabolas. We will henceforth discuss the maximum distance transform in this paper. We also use the term \emph{upper envelope} to describe the curve defining the maximum distance transform for all points in the domain. 

We now define the \emph{range} of a parabola as follows.
\begin{defn}[Range of a parabola]
  If $\parb{p}$ is a parabola in the upper envelope centred at grid point 
$p$, its range, $\range{p}$, is the interval 
$(\sigma_1,\sigma_2]$ in which $\parb{p}$ forms the upper envelope.
  \label{defn:range}
\end{defn}


\begin{figure}[t!]
 \centering
 \includegraphics[scale=0.62]{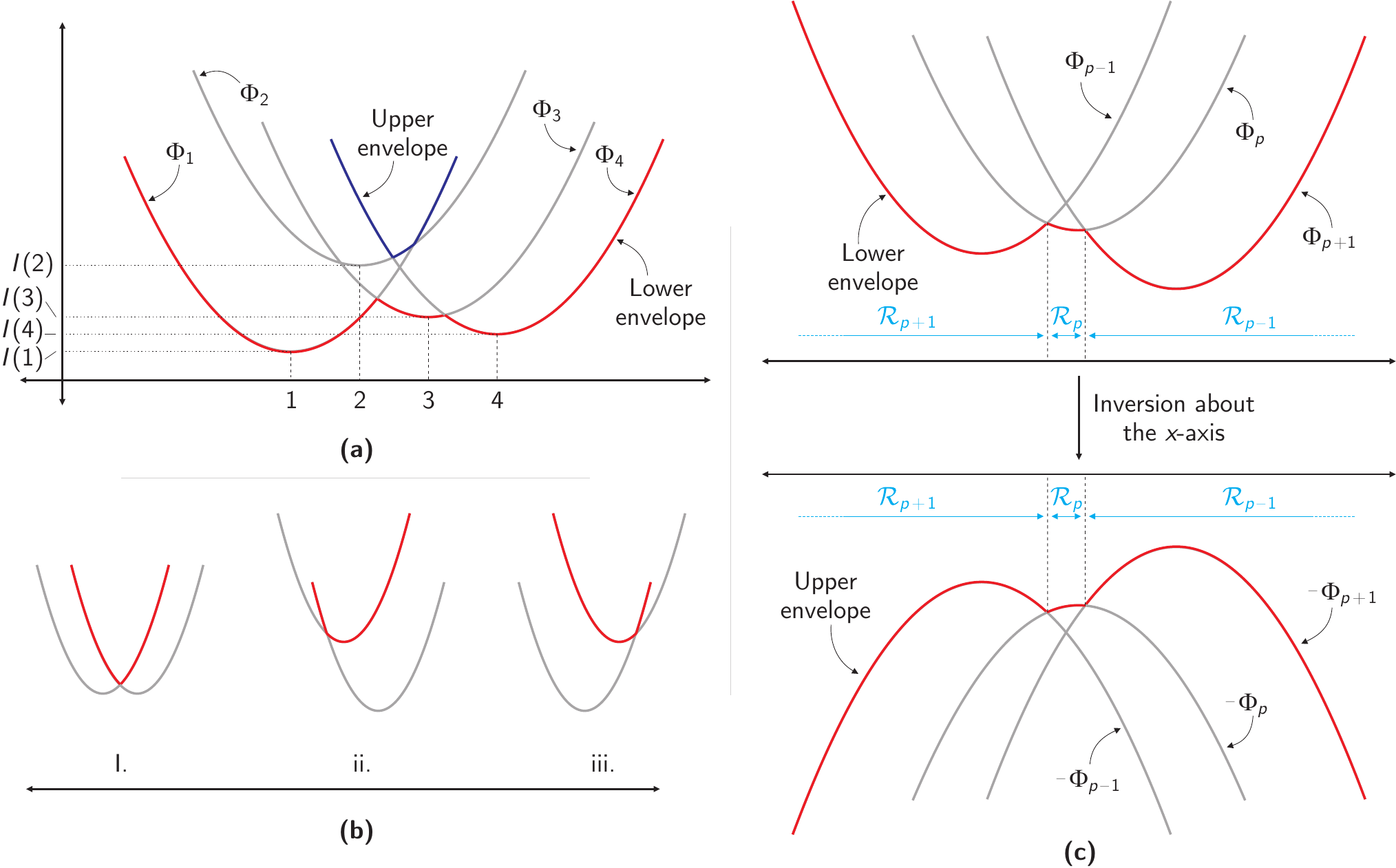}
 \caption{(a) Upper and lower envelopes of a set of parabolas: each grid point has 
one parabola associated with it. The minimum distance transform is given by the 
lower envelope whereas the maximum distance transform is given by the upper 
envelope.\hspace{\textwidth}
(b) The upper envelope at the intersection of two parabolas. The parabola on the right becomes the upper envelope at the left of the point of intersection. The parabola on the left becomes the upper envelope at the right of the point of intersection.\hspace{\textwidth}
(c) The duality between minimum and maximum distance transforms: 
inverting the signs of the quadratic terms changes upward opening parabolas to 
downward opening parabolas; If we know how to compute the lower envelope for 
one case, we can compute the upper envelope for the other case.}
 \label{fig:complete}
\end{figure}

The ranges of several parabolas are shown in Figure \ref{fig:complete}(c).
Felzenszwalb and Huttenlocher \citep{Felzdt} describe an $O(N)$ algorithm for 
computing the minimum distance transform for Equations 
\ref{eqn:distancetransform1D} and \ref{eqn:distancetransform2D}.  Their 
algorithm works by finding the lower envelope of these parabolas. In this work, 
we describe an $O(N)$ algorithm for computing the maximum distance transform 
(Equations \ref{eqn:maxdistancetransform1D} and 
\ref{eqn:maxdistancetransform2D}). Our algorithm works by finding the upper 
envelope of the parabolas.

We begin by showing the duality between maximum and minimum distance 
transforms, followed by our algorithm in detail. We then give a correctness 
proof and analyze the complexity of our algorithm. Finally, we discuss 
applications of our algorithm.

\section{Duality between Minimum and Maximum Distance Transforms}
\label{section:duality}

The duality between the Minimum and Maximum distance transforms can be trivially 
explained by the equation

\begin{equation}
\max_{p}  I(p) + \alpha(p-x)^2 + \beta(p-x) = \min_{p}  -I(p) - \alpha(p-x)^2 - 
\beta(p-x)\,.
\label{eqn:duality}
\end{equation}

Thus, if we know how to solve the minimum distance transform we can also solve 
the maximum distance transform simply by changing the sign of the function. 
The algorithm in \citep{Felzdt} solves the minimum distance 
transform for {\em upward opening parabolas}. Upward opening parabolas are 
characterized by positive quadratic terms, or more precisely $\alpha,\gamma>0$ 
(Equations \ref{eqn:distancetransform1D}-\ref{eqn:maxdistancetransform2D}). 
Thus, the algorithm in \citep{Felzdt} can be used to find the upper envelope / 
maximum distance transform when $\alpha,\gamma<0$.

This is better illustrated in Figure \ref{fig:complete}(c). Upon inverting the 
signs of the quadratic terms, the {\em upward opening parabolas} become {\em 
downward opening parabolas}: the algorithm which gave the lower envelope now 
gives us the upper envelope. However, as mentioned before, the algorithm in 
\citep{Felzdt} solves the minimum distance transform only for upward opening 
parabolas. In the next section, we describe an 
algorithm which finds the upper envelope for upward opening parabolas, and 
therefore can be used to find the lower envelope of downward opening parabolas.

\section{Algorithm}
In this section we describe an algorithm for computing the maximum distance 
transform. We begin with the $1D$ grid case (Equation 
\ref{eqn:maxdistancetransform1D}), and extend it for arbitrary dimensions. Our 
algorithm begins by computing the upper envelope of the parabolas. The distance 
transform at a point $x$ is simply the height of the upper envelope at $x$. 
This is illustrated in Figure \ref{fig:complete}(a).

\subsection{Intersection of two parabolas}
Kindly note that all parabolas, albeit centered at different points in space, 
have the same shape. This is due to the fact that they have the same $\alpha$ 
parameter. The parabola at a grid point $p$ is defined in Equation 
\ref{eqn:parabola}.

 \begin{lmm}
  Parabolas $\parb{p}$ and $\parb{q}$ at two grid points $p$ and $q$, respectively, 
intersect at exactly one point.
  This point of intersection is given by 
  \begin{equation}
   s_{p,q} = \frac{(I(q) + \alpha q^2 + \beta q) - (I(p) + \alpha p^2 + \beta 
p)}{2\alpha(p - q)}\,.
  \end{equation}
  \label{lmm:intersection}
 \end{lmm}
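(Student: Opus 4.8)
The plan is to prove the two assertions separately but with a single computation. First I would write out the equation $\parb{p} = \parb{q}$ explicitly, substituting the definition in Equation~\ref{eqn:parabola}: we need
\begin{equation*}
I(p) + \alpha(p-x)^2 + \beta(p-x) = I(q) + \alpha(q-x)^2 + \beta(q-x)\,.
\end{equation*}
The key observation is that both parabolas share the same leading coefficient $\alpha$, so when we expand the squares the quadratic-in-$x$ terms, namely $\alpha x^2$, appear on both sides and cancel. Likewise the $\beta x$ terms cancel. What remains is a \emph{linear} equation in $x$, which establishes the ``at most one point'' half of the claim: a non-degenerate linear equation in one unknown has exactly one solution. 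Degeneracy (no solution or infinitely many) can only occur if the coefficient of $x$ vanishes, and that coefficient will turn out to be proportional to $(p-q)$, so for distinct grid points $p \neq q$ it is nonzero; I would note this hypothesis explicitly, since $p = q$ would make the statement vacuous or false.

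Next I would carry out the cancellation carefully. Expanding, $\alpha(p-x)^2 = \alpha p^2 - 2\alpha p x + \alpha x^2$ and $\beta(p-x) = \beta p - \beta x$, and similarly for $q$. After cancelling $\alpha x^2$ and $-\beta x$ from both sides, the surviving terms are
\begin{equation*}
I(p) + \alpha p^2 - 2\alpha p x + \beta p = I(q) + \alpha q^2 - 2\alpha q x + \beta q\,.
\end{equation*}
Collecting the $x$ terms on one side gives $2\alpha(q - p)\,x = (I(q) + \alpha q^2 + \beta q) - (I(p) + \alpha p^2 + \beta p)$, and dividing by $2\alpha(q-p) = -2\alpha(p-q)$ yields exactly the claimed formula for $s_{p,q}$. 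I should double-check the sign convention against the denominator $2\alpha(p-q)$ as stated in the lemma; the algebra should come out consistent once the subtraction order in the numerator is matched to it.

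I do not anticipate a genuine obstacle here — the whole content is that equal leading coefficients force the difference of the two parabolas to be an affine function. The only point requiring a word of care is the implicit assumption $\alpha \neq 0$ (so that the parabolas are genuinely parabolas and the division is legal) and $p \neq q$ (so that the linear equation is non-degenerate); both are standing assumptions of the setup ($\alpha > 0$ for upward-opening parabolas, and distinct grid points), so I would simply state them and proceed. A one-line geometric remark — that $\parb{p} - \parb{q}$ is affine with nonzero slope, hence has a unique zero — could replace the computation entirely, but since the lemma asks for the explicit intersection point, I would keep the expansion.
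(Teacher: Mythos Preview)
Your proposal is correct and follows exactly the same approach as the paper: set $\parb{p} = \parb{q}$, observe that the $x^2$ coefficient is independent of the grid point and hence cancels, and solve the resulting linear equation. You are simply more explicit than the paper's one-line sketch, and your instinct to double-check the sign of the denominator against the stated formula is well placed.
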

 \begin{proof}
  The proof follows from algebra. The point of intersection of the parabolas 
$\parb{p}$ and $\parb{q}$ can be obtained by setting $\parb{p} = \parb{q}$ . Since
  the coefficient of $x^2$ in both of them is independent of $p$ and $q$, it 
cancels out, giving us a unique solution. 
 \end{proof}

\subsection{Upper Envelope}

  \begin{lmm}
  If $p < q$ are two grid points with $s_{p,q}$ being the point of intersection 
of $\parb{p}$ and $\parb{q}$, 
  then $\parb{p} > \parb{q}$ for $x > s_{p,q}$, and $\parb{p} < \parb{q}$ for $x < s_{p,q}$.
  \label{lmm:p_val}
 \end{lmm}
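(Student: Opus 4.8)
The plan is to reduce the whole statement to an elementary fact about a single affine function. Define $g(x) = \parb{p} - \parb{q}$. Because both parabolas are built from the same leading coefficient $\alpha$ (Equation \ref{eqn:parabola}), the $x^2$ terms cancel in this difference, so $g$ is an affine function of $x$. First I would expand $\parb{p}$ and $\parb{q}$ and read off the coefficient of $x$ in $g$; the short computation yields slope $2\alpha(q-p)$, with the constant terms being exactly the numerator appearing in Lemma \ref{lmm:intersection}.

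Next I would invoke the standing convention established in Section \ref{section:duality} that we are working with upward opening parabolas, i.e.\ $\alpha > 0$, together with the hypothesis $p < q$. These give $2\alpha(q-p) > 0$, so $g$ is a \emph{strictly increasing} affine function of $x$.

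By Lemma \ref{lmm:intersection} the parabolas meet at the single point $s_{p,q}$, which is therefore the unique root of $g$. Since a strictly increasing affine function is negative strictly to the left of its root and positive strictly to its right, we get $g(x) < 0$, i.e.\ $\parb{p} < \parb{q}$, for $x < s_{p,q}$, and $g(x) > 0$, i.e.\ $\parb{p} > \parb{q}$, for $x > s_{p,q}$. That is exactly the assertion of the lemma.

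I do not expect a genuine obstacle here; the argument is purely computational. The one place that needs care is keeping track of the sign convention: the conclusion relies on $\alpha>0$ (the upward-opening case), and the direction of both inequalities would reverse under $\alpha<0$, so this hypothesis should be stated explicitly rather than left implicit. Everything else — the cancellation of the quadratic term and the evaluation of the slope of $g$ — is routine algebra already foreshadowed by the proof of Lemma \ref{lmm:intersection}.
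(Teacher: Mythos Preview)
Your argument is correct and is in fact cleaner than the paper's own proof. The paper proceeds by a geometric case analysis: it distinguishes whether the intersection point $s_{p,q}$ lies between $p$ and $q$ (right arm of $\parb{p}$ meets left arm of $\parb{q}$), to the left of both, or to the right of both, and in each case compares the gradients of the two parabolas at $s_{p,q}$ to decide which one dominates on either side. Your approach bypasses all of this: by forming $g(x)=\parb{p}-\parb{q}$ and observing that the quadratic terms cancel, you reduce the question to the sign of a single affine function with slope $2\alpha(q-p)>0$, from which the conclusion is immediate. What the paper's version buys is some geometric intuition about how the parabolas sit relative to one another (which it then illustrates in Figure~\ref{fig:complete}(b)); what your version buys is brevity and the explicit isolation of the hypothesis $\alpha>0$, which the paper leaves implicit. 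Both are valid, but yours is the more direct route.
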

 \begin{proof}
  To prove this, we carefully examine the three conditions in which two 
parabolas $\parb{p}$ and $\parb{q}$ can intersect. Each parabola $\parb{p}$ is 
centered at the grid point $p$. The \emph{left arm} of a parabola is defined to 
be the part of the parabola to the left of $p$. Similarly, the \emph{right arm} 
of the parabola is defined to be the part of the parabola to the right of $p$. 
There are three possibilities: 
  \begin{enumerate}
   \item {\bfseries The right arm of $\parb{p}$ intersects the left arm of 
$\parb{q}$.}~~This corresponds to $p < s_{p,q} < q$. Since $\parb{p}$ and $\parb{q}$ have positive and negative gradients, 
respectively, at $s_{p,q}$, the Lemma is trivially satisfied. See Figure \ref{fig:complete}(b), part i.
   \item {\bfseries The left arm of $\parb{p}$ intersects the right arm of 
$\parb{q}$.} This situation can never occur since $p < q$, and the left arm of 
$\parb{p}$ lies entirely to the left of $p$ whereas the right arm of $\parb{q}$
         lies entirely to the right of $q$.
   \item {\bfseries The left (right) arm of $\parb{p}$ intersects the left 
(right) arm of $\parb{q}$.}~~~This corresponds to $s_{p,q} \leq p < q$ and $p < q \leq s_{p,q}$, respectively. The expression for the gradient of the parabola $\parb{p}$ is given by
\begin{equation}
	\frac{d}{d x} \parb{p} = -2\alpha (p-x) -\beta
\end{equation}
The gradient of $\parb{p}$ at $x$, therefore, depends on $p-x$. 
In this case, since $|s_{p,q} - p|$ is less (greater) 
than $|s_{p,q} - q|$,
         the absolute value of the gradient of $\parb{p}$ at $s_{p,q}$ will be less (greater) 
than that of $\parb{q}$. It follows that $\parb{p} > \parb{q}$ for $x > s_{p,q}$, and $\parb{p} < \parb{q}$ for $x < s_{p,q}$. See Figure \ref{fig:complete}(b), parts ii, iii.
  \end{enumerate}
  The proof of the Lemma follows from the three exhaustive cases. Figure \ref{fig:complete}(b) gives visual confirmation to this Lemma.
 \end{proof}

 For simplicity, we will define an order relation over the ranges of parabolas. 
 \begin{defn}[$\succ$ and $\prec$ relations on ranges of parabolas]
 We say $\range{p} = (\sigma_1, \sigma_2] \prec \range{q}$ if and only if $\sigma_2 \leq \inf\left\{x | x \in \range{q}\right\}$. $\succ$ is defined to be the inverse relation to $\prec$ so that $\range{p} \prec \range{q} \Leftrightarrow \range{q} \succ \range{p}\,$. Intuitively, $\range{p} \prec \range{q}$ signifies that $\range{p}$ \emph{lies completely to the left} of $\range{q}\,$.
 \end{defn}
 \begin{defn}[Adjacency of ranges]
  Two ranges $\range{p}$ and $\range{q}$ are said to be \emph{adjacent} if and only if $\sup\left\{x | x \in \range{1}\right\} = \inf\left\{x | x \in \range{2}\right\}$, where $\range{1}$ and $\range{2}$ are $\range{p}$ and $\range{q}$ in no particular order. 
 \end{defn}
 
 \begin{lmm}
  If $\parb{p} \neq \parb{q}$ are two parabolas in the upper envelope such
that $\range{p}$ and $\range{q}$ are adjacent, with $\range{p} \prec \range{q}$, then $p > q$, and vice versa.
  \label{lmm:adj_interval}
 \end{lmm}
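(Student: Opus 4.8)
The plan is to apply Lemma~\ref{lmm:p_val} to one sample point drawn from each of the two adjacent ranges, and to read off the order of $p$ and $q$ from the sign information it provides. Before starting, I would record two preliminary facts: $\alpha>0$, since we work with upward opening parabolas; and $p\neq q$, because $p=q$ would make $\parb{p}$ and $\parb{q}$ the very same parabola, contradicting the hypothesis $\parb{p}\neq\parb{q}$. Consequently exactly one of $p<q$, $p>q$ holds, so for the forward implication it suffices to rule out $p<q$.

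First I would name the common endpoint. Since $\range{p}\prec\range{q}$ and the ranges are adjacent, write $\range{p}=(\sigma_0,\sigma]$ and $\range{q}=(\sigma,\sigma_3]$ with $\sigma=\sup\range{p}=\inf\range{q}$. Because $\sigma\in\range{p}$ and $\parb{p}$ is the upper envelope throughout $\range{p}$, it dominates every other parabola there, so $\Phi_p(\sigma)\ge\Phi_q(\sigma)$. Likewise, picking any point $x_q\in\range{q}$ (the interval $(\sigma,\sigma_3]$ is nonempty), we have $x_q>\sigma$ and $\Phi_q(x_q)\ge\Phi_p(x_q)$.

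Next I would derive a contradiction from $p<q$. By Lemma~\ref{lmm:p_val}, when $p<q$ one has $\Phi_p(x)<\Phi_q(x)$ for $x<s_{p,q}$ and $\Phi_p(x)>\Phi_q(x)$ for $x>s_{p,q}$. Taking the contrapositive at $x=\sigma$, the inequality $\Phi_p(\sigma)\ge\Phi_q(\sigma)$ forces $\sigma\ge s_{p,q}$; taking the contrapositive at $x=x_q$, the inequality $\Phi_p(x_q)\le\Phi_q(x_q)$ forces $x_q\le s_{p,q}$. Chaining these yields $x_q\le s_{p,q}\le\sigma<x_q$, which is absurd. Hence $p>q$, proving the forward direction.

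Finally, for the converse, suppose $p>q$ with $\parb{p},\parb{q}$ in the upper envelope and $\range{p},\range{q}$ adjacent. Adjacency gives either $\range{p}\prec\range{q}$ or $\range{q}\prec\range{p}$; were the latter true, the forward direction with the roles of $p$ and $q$ interchanged would yield $q>p$, a contradiction, so $\range{p}\prec\range{q}$. I expect the only real care needed is bookkeeping: tracking which endpoint of each range is included, checking that the ``upper envelope dominates'' inequalities point the right way, and -- the small trick that makes everything clean -- evaluating $\parb{p}$ at the \emph{included} endpoint $\sigma$ of $\range{p}$ (and $\parb{q}$ at an arbitrary point of $\range{q}$), which is exactly what lets Lemma~\ref{lmm:p_val} be invoked verbatim, with no limiting or continuity argument required.
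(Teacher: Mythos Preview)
Your argument is correct and follows the same route as the paper: assume $p<q$ and obtain a contradiction with Lemma~\ref{lmm:p_val}, then handle the converse by symmetry. The only difference is cosmetic---the paper simply declares the shared endpoint to be $s_{p,q}$ and appeals to Lemma~\ref{lmm:p_val} in one line, whereas you avoid that identification by sampling a point from each range and chaining the two contrapositive inequalities, which is slightly more careful but not a different idea.
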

 \begin{proof}
 Since $\range{p}$ and $\range{q}$ are adjacent and $\range{p} \prec \range{q}$, we can assume their ranges are $\range{p} = (\sigma_1, s_{p,q}]$, and $\range{q} = (s_{p,q}, \sigma_3]$, for some $\sigma_1$ and $\sigma_3\,$.
If $p < q$, Lemma \ref{lmm:p_val} is violated. Hence, $p > q$. Conversely, if $p > q$ and $\parb{p}$ and $\parb{q}$ have adjacent ranges in the upper envelope, then $\range{p} \prec \range{q}$ from Lemma \ref{lmm:p_val}.
 \end{proof}

 \begin{cor}
  (to Lemma \ref{lmm:adj_interval}) If a parabola $\parb{p}$ is maximum over 
$\range{p} = (\sigma_1, \sigma_2]$,
  $\sigma_2$ being finite, then $\exists q < p$ such that $\range{q} \succ \range{p}$. 
  \label{cor:finite_range}
 \end{cor}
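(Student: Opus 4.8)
The plan is to exploit the fact that $\sigma_2$ finite means $\parb{p}$ ceases to be the upper envelope precisely at $\sigma_2$, so some other parabola must take over immediately to its right, and then to read off the required index inequality from Lemma~\ref{lmm:adj_interval}.

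First I would record the structural fact that the upper envelope of the (finitely many) parabolas is a continuous, piecewise–quadratic curve consisting of finitely many arcs. Concretely, for each grid point $r$ the set $\{x : \parb{r} \ge \Phi_t(x)\ \text{for all grid points } t\}$ is an intersection of half-lines (each pairwise comparison contributes a half-line by Lemma~\ref{lmm:intersection} and Lemma~\ref{lmm:p_val}), hence an interval; these intervals are exactly the ranges of Definition~\ref{defn:range}, they are finite in number, and they cover the whole domain. Since $\sigma_2$ is finite, $\range{p}=(\sigma_1,\sigma_2]$ is not the rightmost range, so there is a parabola $\parb{q}$ forming the upper envelope on some interval $(\sigma_2,\sigma_3]$ immediately to the right of $\sigma_2$; moreover $q\neq p$, since $\range{p}$ already comprises all $x$ for which $\parb{p}$ is the envelope. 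Thus $\range{q}$ is adjacent to $\range{p}$ (they meet at $\sigma_2=\sup\range{p}=\inf\range{q}$) and $\range{p}\prec\range{q}$.

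The conclusion is then immediate from Lemma~\ref{lmm:adj_interval}: adjacency of $\range{p}$ and $\range{q}$ together with $\range{p}\prec\range{q}$ forces $p>q$, i.e.\ there exists $q<p$, and by the definition of the $\succ$ relation $\range{q}\succ\range{p}$, which is exactly the claim.

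The step that needs the most care — and the only real obstacle — is justifying that there genuinely is a ``next'' parabola whose range is adjacent to $\range{p}$ at $\sigma_2$: that the upper envelope leaves no gap there and that ``$\sigma_2$ finite'' really places $\sigma_2$ in the interior of the domain with envelope continuing to its right. This is handled by the finiteness of the parabola set and the single-crossing property above. If one prefers an explicit route, one can instead note that by Lemma~\ref{lmm:p_val} the constraints coming from grid points $r>p$ only bound $\range{p}$ from the left, so the right endpoint satisfies $\sigma_2=\min_{r<p}s_{p,r}$ whenever it is finite; any minimizing index $q<p$ then yields a parabola meeting $\parb{p}$ at $\sigma_2$ and forming the envelope just to its right, and again $\range{q}\succ\range{p}$.
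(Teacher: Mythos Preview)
Your proposal is correct and follows essentially the same approach as the paper: the finiteness of $\sigma_2$ forces the existence of a parabola whose range is adjacent to $\range{p}$ on the right, and then Lemma~\ref{lmm:adj_interval} yields $q<p$. You are simply more careful than the paper in justifying why such an adjacent range must exist (and your alternative argument via $\sigma_2=\min_{r<p}s_{p,r}$ is a nice explicit sharpening), but the skeleton is identical.
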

 \begin{proof}
  The finiteness of $\sigma_2$ guarantees the presence of another parabola 
which is maximum over an interval in $(\sigma_2, \infty)\,$. Lemma \ref{lmm:adj_interval} tells 
us that a parabola $\parb{q}$ that is maximum over an interval $\range{q}$ adjacent to $\range{p}$ with $\range{p} \prec \range{q}$ implies $q < p$. This
  guarantees the existence of one such $q$. Further, applying this argument 
  repeatedly, we conclude that for all parabolas $\parb{t}$ whose ranges satisfy $\range{t} \succ \range{p}$, $t < p\,$. 
 \end{proof}
 
 \begin{cor}
 (To Lemma \ref{lmm:adj_interval}) If $\range{p}$ and $\range{q}$ are ranges of two parabolas in the upper envelope such that $\range{p} \succ \range{q}$, then $p < q$, and vice versa. 
 \label{cor:parabola_ordering}
 \end{cor}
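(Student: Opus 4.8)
The plan is to reduce both directions to the adjacent-range case settled by Lemma~\ref{lmm:adj_interval}. First I would handle the forward direction: assume $\range{p} \succ \range{q}$, i.e.\ $\range{q} \prec \range{p}$. Since the grid has $N$ points, the upper envelope is built from at most $N$ parabolas whose ranges are pairwise disjoint half-open intervals that partition a contiguous interval; consequently there is a finite chain $\parb{q} = \parb{r_0}, \parb{r_1}, \dots, \parb{r_k} = \parb{p}$ of parabolas in the upper envelope in which consecutive ranges are adjacent and $\range{r_i} \prec \range{r_{i+1}}$ for every $i$. Applying Lemma~\ref{lmm:adj_interval} to each adjacent pair gives $r_i > r_{i+1}$, so $q = r_0 > r_1 > \cdots > r_k = p$, and hence $p < q$. (Equivalently, $\range{q}$ is bounded above because it lies to the left of the nonempty interval $\range{p}$, so one can directly invoke the ``for every $\parb{t}$ with $\range{t} \succ \range{q}$, $t < q$'' statement established inside the proof of Corollary~\ref{cor:finite_range}.)

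For the converse, suppose $p < q$ and that both $\parb{p}$ and $\parb{q}$ appear in the upper envelope. Their ranges are disjoint half-open intervals, so exactly one of $\range{p} \prec \range{q}$ and $\range{p} \succ \range{q}$ holds. Were $\range{p} \prec \range{q}$, then $\range{q} \succ \range{p}$, and the forward direction (with $p$ and $q$ interchanged) would force $q < p$, contradicting $p < q$. Therefore $\range{p} \succ \range{q}$, as claimed.

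The argument is essentially bookkeeping layered on Lemma~\ref{lmm:adj_interval}, so I do not anticipate a real obstacle. The two points that deserve a line of justification are (i) the existence and finiteness of the chain of successively adjacent ranges running from $\range{q}$ to $\range{p}$, which follows from the upper envelope containing at most $N$ parabolas, and (ii) the fact that the ranges of two distinct parabolas in the upper envelope are disjoint and hence comparable under $\prec$, which is precisely what lets the converse be deduced formally from the forward direction.
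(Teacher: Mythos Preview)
Your proposal is correct and follows essentially the same approach as the paper: both arguments build a finite chain of successively adjacent ranges between $\range{q}$ and $\range{p}$ and apply Lemma~\ref{lmm:adj_interval} to each adjacent pair to obtain a strict chain of inequalities among the grid indices. For the converse, the paper simply invokes ``a similar repeated application of Lemma~\ref{lmm:adj_interval},'' whereas you derive it by contradiction from the forward direction using comparability of the disjoint ranges; this is a minor stylistic difference and arguably cleaner, but the underlying content is the same.
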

 \begin{proof}
  We know that $\range{p} \succ \range{q}$. Let $t$ be the parabola in the upper envelope so that $\range{t} \prec \range{p}$ and is adjacent to it. By Lemma \ref{lmm:adj_interval}, $p < t$. If we keep applying this argument repeatedly to $t$, we will eventually reach $q$. The proof follows. 
  
  Conversely, if $p < q$ and $\parb{p}$ and $\parb{q}$ are part of the upper envelope, then $\range{p} \succ \range{q}$. by a similar repeated application of Lemma \ref{lmm:adj_interval}. 
 \end{proof}

\subsection{Algorithm}
Before describing the algorithm, we describe the data-structures employed.
 \begin{enumerate}
  \item $k$ is an integer counting the number of parabolas in the upper 
envelope. 
  \item $v[\cdot]$ is an array which holds the indices of the parabolas 
currently forming the upper envelope. At any stage of the algorithm $v[\cdot]$ 
holds $k+1$ elements.
  \item $z[\cdot]$ is an array holding the ranges of the parabolas in \emph{decreasing} 
order, i.e. from $+\infty \rightarrow -\infty$. Thus, the range of 
$p$-th parabola, given by $\parb{v[p]}$ is given by $\range{v[p]} = (z[p+1], z[p]]$. 
 \end{enumerate}

We assume that all parabolas are ordered according to the horizontal locations 
of their grid points.
We start by initialising the upper envelope to be the parabola at the first 
grid point. The range is initialised to be $(+\infty,-\infty)$.
We compute the upper envelope by iteratively scanning the grid points from 
left to right. Each time we encounter a new grid point $q$, we compute its 
intersection with the existing parabolas in the upper envelope. If $\parb{q}$ intersects 
a parabola $\parb{p}$ in the upper envelope inside $\range{p}$, we update the upper envelope: all parabolas at grid points 
to the right of $p$ are removed from the upper envelope, the parabola $q$ is 
added to the upper envelope, and the ranges of $p$ and $q$ are updated. It is guaranteed that $\parb{q}$ will intersect with at least one parabola $\parb{p}$ inside $\range{p}$.
This is demonstrated in Figure \ref{fig:new_par} and formally proven in Lemma \ref{lmm:env_mod}.

After computing the upper envelope, we scan the grid points from left to 
right filling in the values of the distance transform by investigating the 
range array. A pseudo-code is described in Algorithm \ref{alg:dt}. 
We now prove the correctness of Algorithm \ref{alg:dt}. 

 \begin{algorithm}[h!]
  \caption{Function maximumDistanceTransform}
  \label{alg:dt}
  \begin{algorithmic}[1]
   \State \textbf{Input: The unary function $I$, parameters $\alpha, \beta$, set of grid points $\{0, 1, \ldots, N\}$}
   \State Initialise arrays: $v$ and $z$. 
   \State $v[0] \gets 0$
   \State $z[0] \gets \infty$
   \State $z[1] \gets -\infty$
   \State $k \gets 0$
   \For{$q = 1 \rightarrow N - 1$} \label{alg:dt:1for} 
    \For{$p = 0 \rightarrow k$}				 \label{alg:dt:2for}
     \State $s \gets \textstyle{\frac{(I(q) + \alpha q^2 + \beta q) - (I(v[p]) + \alpha v[p]^2 + \beta v[p])}{2\alpha(v[p] - q)}}$ \label{alg:dt:intersect}
     \If{$s > z[p + 1] \mathbf{~and~} s \leq z[p]$}   \label{alg:dt:if}
      \State $k \gets p + 1$				 \label{alg:dt:new_p_s}
      \State $v[k] \gets q$
      \State $z[k + 1] \gets -\infty$
      \State $z[k] \gets s$
      \State break							 
\label{alg:dt:new_p_e}
     \EndIf
    \EndFor
   \EndFor
   
   \For{$q = 0 \rightarrow N - 1$} \label{alg:dt:3for}
    \While{$z[k] < q$}
     \State $k \gets k - 1$
    \EndWhile
	\State $DT[q] \gets I[v[k]] + \alpha(v[k]-q)^2 + \beta(v[k]-q)$
   \EndFor
  \end{algorithmic}
 \end{algorithm}

 \begin{thm}
  Algorithm 1 correctly computes the maximum distance transform.
  \label{thm:correctness}
 \end{thm}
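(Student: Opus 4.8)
The plan is to establish a loop invariant for the main envelope-construction loop (lines \ref{alg:dt:1for}--\ref{alg:dt:new_p_e}) and then argue that the final readout loop (line \ref{alg:dt:3for} onwards) correctly extracts the distance-transform values. The invariant I would maintain is: after processing grid point $q$, the arrays $v[0..k]$ and $z[0..k+1]$ encode exactly the upper envelope of the parabolas $\{\parb{0}, \parb{1}, \ldots, \parb{q}\}$, meaning that $\parb{v[p]}$ is maximal precisely on $\range{v[p]} = (z[p+1], z[p]]$, that $z[0] = +\infty$, $z[k+1] = -\infty$, and that $z[\cdot]$ is strictly decreasing. The base case is immediate from the initialization (a single parabola, range $(+\infty,-\infty)$, which the excerpt writes loosely but is the correct degenerate interval).

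For the inductive step I would add $\parb{q}$ to an envelope of $\{\parb{0},\ldots,\parb{q-1}\}$. Here I would lean on Corollary \ref{cor:parabola_ordering} and Lemma \ref{lmm:adj_interval}: since the existing envelope parabolas are indexed in increasing order of grid point as one moves left along the $z$-array — wait, I need to be careful about orientation — the stored order in $v[\cdot]$ is from the leftmost range to the rightmost, and by Corollary \ref{cor:parabola_ordering} the grid indices along $v[0],v[1],\ldots,v[k]$ are \emph{decreasing}. Since $q$ is larger than every index currently in the envelope, by Lemma \ref{lmm:p_val} $\parb{q}$ dominates on an unbounded interval to the left; so the inner loop scans $p=0,1,\ldots,k$ looking for the unique $p$ where the intersection $s_{v[p],q}$ falls inside $\range{v[p]}=(z[p+1],z[p]]$. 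I would invoke Lemma \ref{lmm:env_mod} (the cited-but-not-yet-stated lemma guaranteeing such a $p$ exists and is unique) for existence; uniqueness also follows from monotonicity of the intersection points as $p$ increases. Once found, all parabolas $v[p+1],\ldots,v[k]$ lie strictly to the right in range and, being at grid points $< v[p] \le q$ but dominated by $\parb{q}$ to the left of $s$, are entirely dominated — so truncating $k$ to $p+1$, setting $v[k]=q$, $z[k]=s$, $z[k+1]=-\infty$ restores the invariant. This monotone-scan-then-splice argument is the heart of the proof, and verifying that the truncation removes \emph{exactly} the now-dominated parabolas (no more, no fewer) is the step I expect to be the main obstacle; it requires combining Lemma \ref{lmm:p_val} applied pairwise with the ordering corollaries to show that once $\parb{q}$ overtakes $\parb{v[p]}$ at $s$, it stays above all of $\parb{v[p+1]},\ldots,\parb{v[k]}$ on their former ranges.

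Finally, with the invariant established for $q=N-1$, the arrays encode the upper envelope of all $N$ parabolas. For the readout loop I would argue a second, simpler invariant: as $q$ runs $0 \to N-1$, the \texttt{while} loop decrements $k$ until $z[k] \ge q$, i.e.\ until $q$ lies in $\range{v[k]} = (z[k+1], z[k]]$; since both $q$ and the scan index advance monotonically and the $z$-values are strictly decreasing, each envelope segment is visited in order and $k$ never needs to be incremented. Hence $v[k]$ is the grid point whose parabola is maximal at $q$, and $DT[q] = I[v[k]] + \alpha(v[k]-q)^2 + \beta(v[k]-q) = \max_p \parb{p}\big|_{x=q}$, which is exactly \eqref{eqn:maxdistancetransform1D}. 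I would close by noting the one boundary subtlety — whether the half-open convention $(z[p+1],z[p]]$ together with the strict/non-strict tests \texttt{$s>z[p+1]$} and \texttt{$s\le z[p]$} on line \ref{alg:dt:if} handles ties consistently (two parabolas of equal height at an integer grid point) — and remarking that either choice yields the same value $D(x)$, so correctness is unaffected.
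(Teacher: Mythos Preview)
Your overall architecture matches the paper's: both proceed by induction on the number of grid points processed, verifying that after each outer iteration the arrays $v[\cdot],z[\cdot]$ encode the upper envelope of the parabolas seen so far, and both rely on Lemma~\ref{lmm:p_val} together with the range-ordering results to justify the splice. You additionally treat the readout loop (lines~\ref{alg:dt:3for} onward), which the paper's own proof does not argue explicitly; that part of your proposal is fine.

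However, you have the orientation of the data structures reversed, and this propagates into the key step. From the initialization $z[0]=+\infty$, $z[1]=-\infty$ and the stated convention $\range{v[p]}=(z[p+1],z[p]]$, the array $v[\cdot]$ runs from the \emph{rightmost} range at $v[0]$ to the \emph{leftmost} range at $v[k]$; by Corollary~\ref{cor:parabola_ordering} this means the grid indices $v[0]<v[1]<\cdots<v[k]$ are \emph{increasing}, not decreasing as you assert. Consequently the parabolas $v[p+1],\ldots,v[k]$ that get truncated sit at grid points \emph{larger} than $v[p]$ (strictly between $v[p]$ and $q$) and occupy ranges to the \emph{left} of $\range{v[p]}$, which is precisely the region where the new parabola $\parb{q}$ --- having the largest grid index so far --- dominates by Lemma~\ref{lmm:p_val}. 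Your sentence asserting that they ``lie strictly to the right in range'' at ``grid points $< v[p]$'' is therefore backwards, and the domination argument as you wrote it does not go through: if those parabolas really had ranges to the right, $\parb{q}$ (which dominates on the left) would not cover them. Fixing the orientation repairs everything; once corrected, your plan coincides with the paper's Case~1/Case~2 analysis.
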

 \begin{proof}
  We use the principle of mathematical induction on the number of parabolas, $N$, for our proof. 
  
  \textbf{Base case.} For the base case, $N = 1$, the algorithm does not enter the loop on line \ref{alg:dt:1for} at all.
  Since the arrays $v[\cdot]$ and $z[\cdot]$ are already initialised accordingly, the algorithm returns the correct envelope---a single parabola having the range $(-\infty, \infty)\,$.
  
  \textbf{Inductive step.} Let us assume that the algorithm gives the correct 
upper envelope for the
  $q - 1$ grid-points considered so far. We now consider the $q$-th parabola. 
  The algorithm then computes
  $s_{v[p],q}$---the point of intersection 
  of the $p$-th parabola in the envelope, $\parb{v[p]}$, with $\parb{q}$. (Line 
\ref{alg:dt:intersect} in Algorithm \ref{alg:dt}). Recall 
  $\range{p}$ from Definition \ref{defn:range}. 
  There are two possibilities:
  \begin{description}
   \item [\bfseries Case 1: $s_{v[p],q} \in \range{v[p]}$.] It follows from Lemma \ref{lmm:p_val} that $\parb{q} > \parb{v[p]}\, \forall\,x < s_{v[p],q}$, and subsequently, $\parb{q} > 
\parb{t}$ in the same
   interval $\forall t < q$. 
   Hence, $\range{q} = (-\infty, s_{v[p],q}]$. Furthermore, parabolas
   in the upper envelope whose ranges fall in $\range{q}$ are no longer
   maximum in their respective ranges. These parabolas correspond to 
   all the grid points between $v[p]$ and $q$, and they are removed 
   by readjusting the value of $k$ in the algorithm. 
   Furthermore, no more parabolas can be added or removed from the envelope in 
this iteration,
   so we break the \texttt{for} loop. Lines \ref{alg:dt:new_p_s} to 
\ref{alg:dt:new_p_e} in 
   Algorithm \ref{alg:dt} handle precisely this case. 

   \item [\bfseries Case 2: $s_{v[p],q} \notin \range{v[p]}$.] This says 
that the point of intersection of $\parb{v[p]}$ and 
   $\parb{q}$ lies outside $\range{v[p]}$. In this case, the presence of 
$\parb{q}$ in no way affects $\range{v[p]}$.
   To see why, let us say $\range{v[p]} = (\sigma_1, \sigma_2]$. 
   If $s_{v[p],q} \le \sigma_1$, then $\parb{q} < \parb{v[p]} \forall x > 
s_{v[p],q}$ 
   from Lemma \ref{lmm:p_val}, and $\parb{q} < \parb{v[p]}, \,\forall x \in \range{v[p]}$. 
   
   On the other hand, $s_{v[p],q} > \sigma_2$ is impossible (assuming $\sigma_2$ 
   is finite). We will show this by contradiction. Let us say $s_{v[p],q} > 
\sigma_2$ holds. 
   It follows from Corollary \ref{cor:finite_range} that $\exists m < v[p]$ 
   such that $\parb{m} > \parb{v[p]} \forall x \in (\sigma_2, s_{v[p],q})$ and 
$s_{m,q} \in \range{m}$. 
   Now the algorithm scans parabolas in $v[\cdot]$ (in the second for loop) from left to right, and so 
the
   parabola $\parb{v[p]}$, which comes in the list after $\parb{m}$, would be removed from $v[\cdot]$ due to the intersection of $\parb{m}$ and $\parb{q}$
   (line \ref{alg:dt:if} in Algorithm \ref{alg:dt}). This is a contradiction as this implies the intersection of $\parb{v[p]}$ and $\parb{q}$ is never computed.  

   As this requires no modifications to the ranges of any parabolas in the 
envelope or the envelope itself, 
   algorithm \ref{alg:dt} does nothing if $s_{v[p],q} \notin 
\range{v[p]}$ holds. 
  \end{description}
  
  The second \texttt{for} loop (line \ref{alg:dt:2for}) compares a new 
parabola with existing 
  parabolas in $v[\cdot]$. The discussion above shows that this iteration 
  adds the $q$-th parabola to the already existing upper envelope.
  It follows from the preceeding analysis that this addition is indeed done correctly. A parabola
  is added only when the condition in case 1, above, is satisfied. By
  Lemma \ref{lmm:p_val}, we know that the new parabola does not dominate the current upper envelope $\forall x \in (s_{v[p],q},\infty)$, and
  hence doesn't affect the part of the envelope in this range. 
 \end{proof}

 The following lemma follows as a result of the proof of Theorem 
\ref{thm:correctness}. 
 \begin{lmm}
  Each parabola, when first considered, becomes part of the upper envelope.
  \label{lmm:env_mod}
 \end{lmm}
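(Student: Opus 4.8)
The plan is to obtain the lemma as an immediate corollary of the case analysis already performed in the proof of Theorem~\ref{thm:correctness}: I will show that, for every grid point $q$ processed by the outer loop, the inner \texttt{for} loop of line~\ref{alg:dt:2for} cannot run to completion without triggering the \texttt{if} of line~\ref{alg:dt:if}, and firing that \texttt{if} is exactly the event ``$\parb{q}$ is inserted into the upper envelope'' (lines~\ref{alg:dt:new_p_s}--\ref{alg:dt:new_p_e}).

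First I would fix the state $k,\,v[\cdot],\,z[\cdot]$ at the start of the iteration for $q$, so that the current envelope is $\parb{v[0]},\dots,\parb{v[k]}$ with $\range{v[p]}=(z[p+1],z[p]]$, and record the two facts I need. (i)~By Lemma~\ref{lmm:intersection}, and because $v[p]<q$, the quantity $s_{v[p],q}$ computed on line~\ref{alg:dt:intersect} is a finite real number for every $p$. (ii)~The proof of Theorem~\ref{thm:correctness} splits each inner iteration into Case~1 ($s_{v[p],q}\in\range{v[p]}$, the \texttt{if} fires) and Case~2 ($s_{v[p],q}\notin\range{v[p]}$), and within Case~2 it rules out $s_{v[p],q}>\sup\range{v[p]}$ whenever $\sup\range{v[p]}$ is finite; hence Case~2 at index $p$ forces $s_{v[p],q}\le\inf\range{v[p]}=z[p+1]$.

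Then I would argue by contradiction, assuming the inner loop runs through $p=0,\dots,k$ without ever landing in Case~1. If $k=0$, then $\range{v[0]}=(-\infty,+\infty]$, so the finite number $s_{v[0],q}$ automatically lies in $\range{v[0]}$ --- that is Case~1 at $p=0$, a contradiction. If $k\ge1$, consider $p=k$: the leftmost range is $\range{v[k]}=(-\infty,z[k]]$ with $z[k]$ finite, since $z[k]$ was set to a finite intersection abscissa when $\parb{v[k]}$ was most recently inserted, while the algorithm always writes $z[k+1]=-\infty$. Being in Case~2 at $p=k$ would then force $s_{v[k],q}\le z[k+1]=-\infty$, contradicting the finiteness from (i). Hence the inner loop must fire the \texttt{if} for some $p\le k$, so $\parb{q}$ is appended to the envelope. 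Since the outer loop processes $\parb{1},\dots,\parb{N-1}$ once each and $\parb{0}$ sits in the envelope from initialisation, every parabola joins the upper envelope the moment it is first considered.

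The main obstacle I expect is the bookkeeping at the interval endpoints and the degenerate initial configuration: one has to be precise that ``Case~2'' really does imply $s_{v[p],q}\le z[p+1]$, which requires invoking the ``$s>\sup\range$ is impossible'' half of the Theorem's proof and therefore requires $\sup\range{v[p]}$ to be finite --- forcing a separate treatment of $k=0$, where $\sup\range{v[0]}=+\infty$. Once the right-endpoint possibility is excluded, the contradiction between the finite value $s_{v[k],q}$ and the left endpoint $z[k+1]=-\infty$ closes everything, and no further computation is needed.
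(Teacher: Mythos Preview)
Your proposal is correct and rests on the same observation the paper uses: the ranges $\range{v[0]},\dots,\range{v[k]}$ always tile $(-\infty,+\infty]$, so the finite intersection abscissa of $\parb{q}$ with the current envelope must land in one of them and trigger the \texttt{if}. The paper's own proof is a one-line version of this --- it simply notes that the envelope's limits stay at $(-\infty,\infty)$ throughout, hence the new parabola necessarily intersects it --- whereas you unpack the same idea through the Case~1/Case~2 dichotomy of Theorem~\ref{thm:correctness} and a separate treatment of $k=0$ versus $k\ge1$. Your route is more explicit about why the inner loop cannot exhaust all indices (in particular, your handling of the $p=k$ endpoint makes precise what the paper leaves implicit), but the underlying argument is the same.
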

 \begin{proof}
  The algorithm initialises the limits of the upper envelope to $(-\infty, 
\infty)$. Throughout the algorithm, these limits are never changed. Hence, a new parabola will
  always intersect the upper envelope at some point. Therefore, each parabola being considered will always be added to the upper envelope (Figure \ref{fig:new_par}).
  It may cause the deletion of previously considered parabolas from the upper envelope, and it may be deleted from the upper envelope by a
  subsequent parabola. Thus, in each outer loop over $q$ (line \ref{alg:dt:1for}) the new parabola being considered modifies the upper envelope.
 \end{proof}
  
  We can state some further properties of the upper envelope in the following Lemmas. 
 \begin{lmm}
  The first grid point is always a part of the upper envelope. 
  \label{lmm:first_grid_point}
 \end{lmm}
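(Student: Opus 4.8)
The plan is to argue that the parabola $\parb{0}$ centred at the leftmost grid point $p=0$ is never removed from the envelope, so it survives to the end of the algorithm. The key observation is that a parabola is deleted from $v[\cdot]$ only inside Case~1 of the proof of Theorem~\ref{thm:correctness}: when a later parabola $\parb{q}$ (with $q>0$) intersects some $\parb{v[p]}$ inside its range, all parabolas strictly to the right of $v[p]$ in the array $v[\cdot]$ are discarded by resetting $k \gets p+1$. Since parabolas are processed left to right and $0$ is the first index inserted, $\parb{0}$ always occupies position $v[0]$; a deletion reaching position $0$ would require $p+1 \le 0$, i.e. $p \le -1$, which never happens because $p$ ranges over $0,\dots,k$. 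Hence $\parb{0}$ is structurally never removed.

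Concretely, I would carry this out in three short steps. First, I invoke Lemma~\ref{lmm:env_mod} and the structure of Algorithm~\ref{alg:dt}: the only line that shrinks the envelope is line~\ref{alg:dt:new_p_s}, $k \gets p+1$, with $p \ge 0$, so the entry $v[0]$ is never overwritten and $k \ge 0$ is maintained as a loop invariant. Second, I note that $v[0]$ is set to $0$ at initialisation and is never reassigned, so the grid point $0$ is in $v[\cdot]$ throughout. Third, after the envelope-construction loop terminates, $v[0]$ still holds $0$ and, by the correctness of the envelope established in Theorem~\ref{thm:correctness}, $\range{0} = (-\infty, z[1]]$ is a genuine (possibly degenerate-free) interval on which $\parb{0}$ attains the maximum; in particular $z[0] = \infty$ is never modified, so $\parb{0}$ forms the upper envelope on a left-unbounded interval. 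An alternative, more geometric justification for the same conclusion: Corollary~\ref{cor:finite_range} shows that whenever a parabola has a finite right endpoint there is another parabola with smaller grid index taking over to its right; iterating, the parabola owning the left-unbounded piece $(-\infty,\sigma]$ of the envelope must have the \emph{smallest} grid index among all parabolas in the envelope, and since every parabola is eventually inserted (Lemma~\ref{lmm:env_mod}) and no parabola has index smaller than $0$, that parabola is $\parb{0}$.

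The only mild subtlety — the step I would be most careful about — is ruling out that $\parb{0}$ gets deleted and then the leftmost piece of the envelope is owned by some other parabola: this cannot happen precisely because deletion always keeps the prefix $v[0\ldots p]$ intact and $p \ge 0$, so index $0$ is in a protected prefix of length at least one at all times. Once this invariant is stated, the rest is immediate, so I expect no real obstacle here; the lemma is essentially a bookkeeping consequence of how the array $v[\cdot]$ is truncated.
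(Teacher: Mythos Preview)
Your main argument --- the bookkeeping observation that $v[0]$ is set to $0$ at initialisation, is never overwritten (because line~\ref{alg:dt:new_p_s} writes only to $v[p+1]$ with $p\ge 0$), and always lies within the valid index range $[0,k]$ --- is correct, and together with Theorem~\ref{thm:correctness} it does establish the lemma. This is, however, a genuinely different route from the paper's proof. The paper argues purely geometrically: it supposes the leftmost grid point in the envelope is some $p>0$, uses Corollary~\ref{cor:parabola_ordering} to conclude $\range{p}$ must extend to $+\infty$, and then derives a contradiction from Lemma~\ref{lmm:p_val}, since $\parb{0}$ dominates $\parb{p}$ for all $x>s_{0,p}$. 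The paper's argument is self-contained and does not invoke the correctness of the algorithm; yours is shorter but leans on Theorem~\ref{thm:correctness} to identify the algorithm's output with the true upper envelope.

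One small but genuine slip to fix: you twice say $\parb{0}$ owns a \emph{left}-unbounded piece of the envelope. In this paper's conventions $z[\cdot]$ is decreasing, so $\range{v[0]}=(z[1],z[0]]=(z[1],\infty]$ is \emph{right}-unbounded. More generally, by Lemma~\ref{lmm:p_val} and Corollary~\ref{cor:parabola_ordering} the parabola with the smallest grid index dominates on the far right, not the far left; your ``alternative geometric justification'' has the orientation reversed throughout. This does not damage your primary algorithmic argument, which does not rely on the orientation, but it does make the alternative paragraph incorrect as written.
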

 \begin{proof}
  Let $p > 0$ be the left-most grid point included in the upper envelope (assuming grid points start at $0$). 
  Since there are no grid points $q < p$ which form a part of the upper envelope, Corollary \ref{cor:parabola_ordering} tells us that $\range{p}$ must go till $+\infty$. However, as as $0 < p$, $\parb{0} > \parb{p}\, \forall x > s_{0,p}$ (from Lemma \ref{lmm:p_val}). This implies $\range{p}$ cannot go on until $+\infty$. Since this argument breaks down only when $p = 0$, we conclude that the first grid point must be included in the upper envelope. 
 \end{proof}
 \begin{lmm}
  The first and last grid points are always part of the upper envelope.
  \label{cor:extreme_grid_points}
 \end{lmm}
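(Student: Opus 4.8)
The plan is to observe that half of this Lemma---that the first grid point lies in the upper envelope---is already Lemma \ref{lmm:first_grid_point}, so the only new content is the claim for the last grid point, call it $n$. I would prove it by the exact mirror of the argument used for Lemma \ref{lmm:first_grid_point}.

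First I would identify which parabola owns the range that runs off to $-\infty$. Let $p$ be the largest grid point whose parabola $\parb{p}$ appears in the upper envelope. If $\range{p}$ did not extend to $-\infty$, then there would be another parabola in the envelope whose range lies immediately to the left of $\range{p}$ (this is the left-directed analogue of the situation handled in Corollary \ref{cor:finite_range}); by Corollary \ref{cor:parabola_ordering} that parabola would be centred at a grid point strictly larger than $p$, contradicting the maximality of $p$. Hence $\inf \range{p} = -\infty$, i.e. $\parb{p}$ is the maximum on some interval of the form $(-\infty,\sigma_2]$.

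Next I would rule out $p < n$. Suppose $p < n$. By Lemma \ref{lmm:p_val} applied to the pair $(p,n)$ we get $\parb{n} > \parb{p}$ for every $x < s_{p,n}$, so $\parb{p}$ cannot be the maximum at any point below $\min(s_{p,n},\sigma_2)$, contradicting $\inf \range{p} = -\infty$. The only way the contradiction is avoided is $p = n$, so the last grid point is in the upper envelope; together with Lemma \ref{lmm:first_grid_point} this finishes the proof.

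I do not expect a real obstacle: the argument is symmetric to Lemma \ref{lmm:first_grid_point}, with ``$+\infty$'' and ``smallest grid point'' replaced by ``$-\infty$'' and ``largest grid point''. The only point requiring care is getting the orientation of the order relation $\prec$ right---that the right-most centre corresponds to the left-most range, which is exactly Corollary \ref{cor:parabola_ordering}. As an alternative one-line route I could instead appeal to Lemma \ref{lmm:env_mod}: processing grid points left to right, the parabola at $n$ is the last one considered, it joins the envelope when considered, and no subsequent parabola exists to evict it, so it survives in the final upper envelope.
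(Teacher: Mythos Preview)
Your proposal is correct. Your primary argument is a geometric mirror of Lemma~\ref{lmm:first_grid_point}: you identify the parabola whose range reaches $-\infty$, use Corollary~\ref{cor:parabola_ordering} to show it must be centred at the largest grid point appearing in the envelope, and then use Lemma~\ref{lmm:p_val} to force that grid point to be $n$. This is sound and the orientation bookkeeping is right.

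The paper, however, takes exactly the route you relegate to an ``alternative one-line'': it invokes Lemma~\ref{lmm:env_mod} to note that the last parabola scanned joins the upper envelope and, being last, can never be evicted. So the paper's argument is operational (it reasons about the algorithm's pass over grid points), while your main argument is structural (it reasons directly about the geometry of the envelope, independent of how the algorithm builds it). Your approach has the advantage of matching the style of Lemma~\ref{lmm:first_grid_point} and yielding a statement about the \emph{true} upper envelope rather than the algorithm's output; the paper's approach is shorter but tacitly relies on the correctness of the algorithm (Theorem~\ref{thm:correctness}) to conclude anything about the actual envelope. Either is acceptable; just be aware that your ``alternative'' is in fact the paper's chosen proof.
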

 \begin{proof}
  The proof for the first grid point follows from Lemma \ref{lmm:first_grid_point}. Furthermore, note that each time a new parabola is scanned, it is ``temporarily'' the last parabola, until the remaining are scanned as well. Since the last parabola scanned will always be a part of the upper envelope even if there are parabolas after it which can potentially remove it from the upper envelope, the proof is complete.
 \end{proof}

 \begin{figure}[t!]
  \centering
  \includegraphics[scale=0.62]{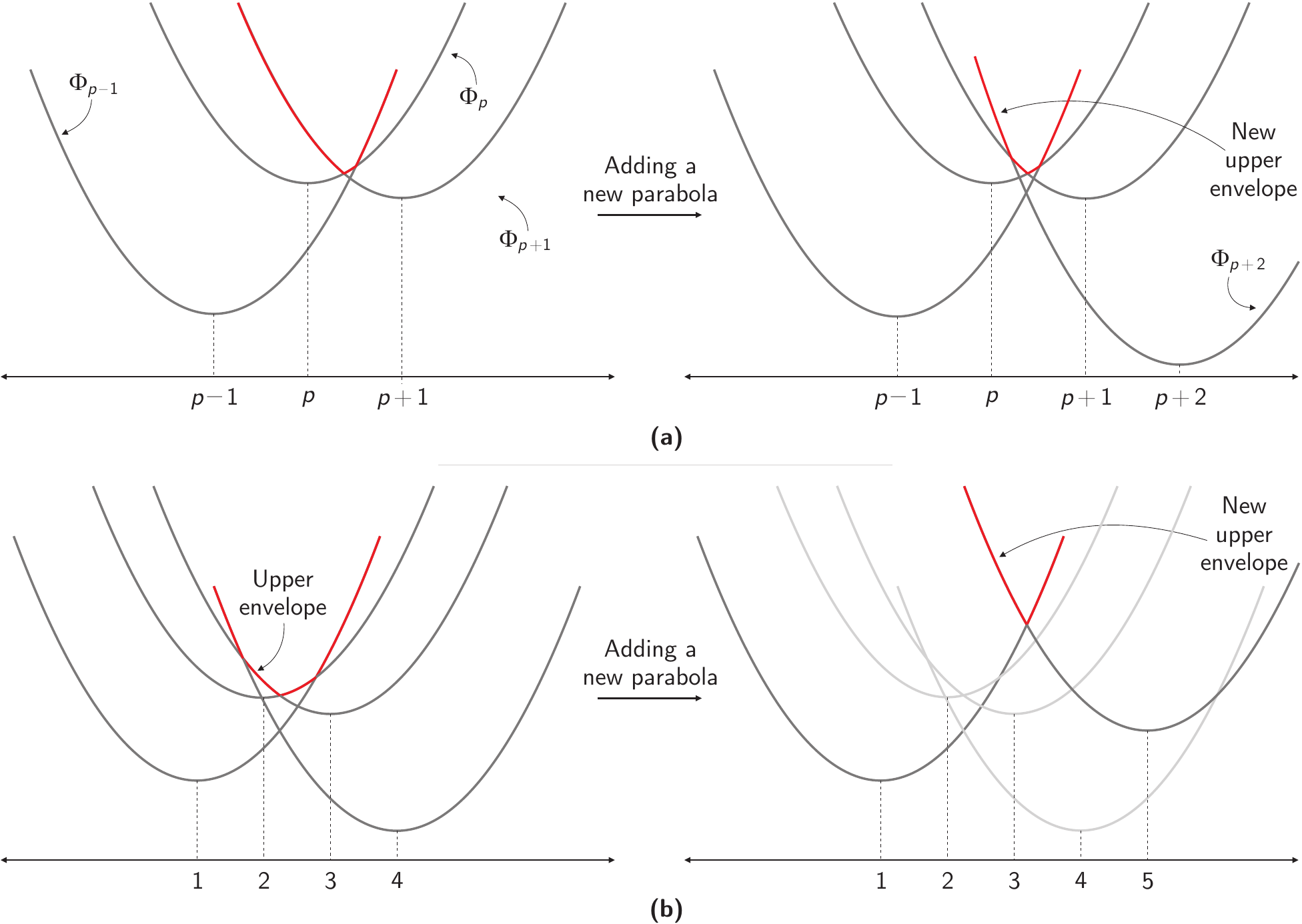}
  \caption{Each parabola when first considered is added to the upper envelope (Lemma \ref{lmm:env_mod}): (a) shows the case in which the new parabola
  is added to the upper envelope without removing any existing parabolas, (b) shows the case where the addition of the new parabola causes the deletion of
  several parabolas from the upper envelope.}
  \label{fig:new_par}
 \end{figure}
\subsection{Runtime Complexity Analysis}
 \begin{figure}[ht!]
  \centering \includegraphics[scale=0.7]{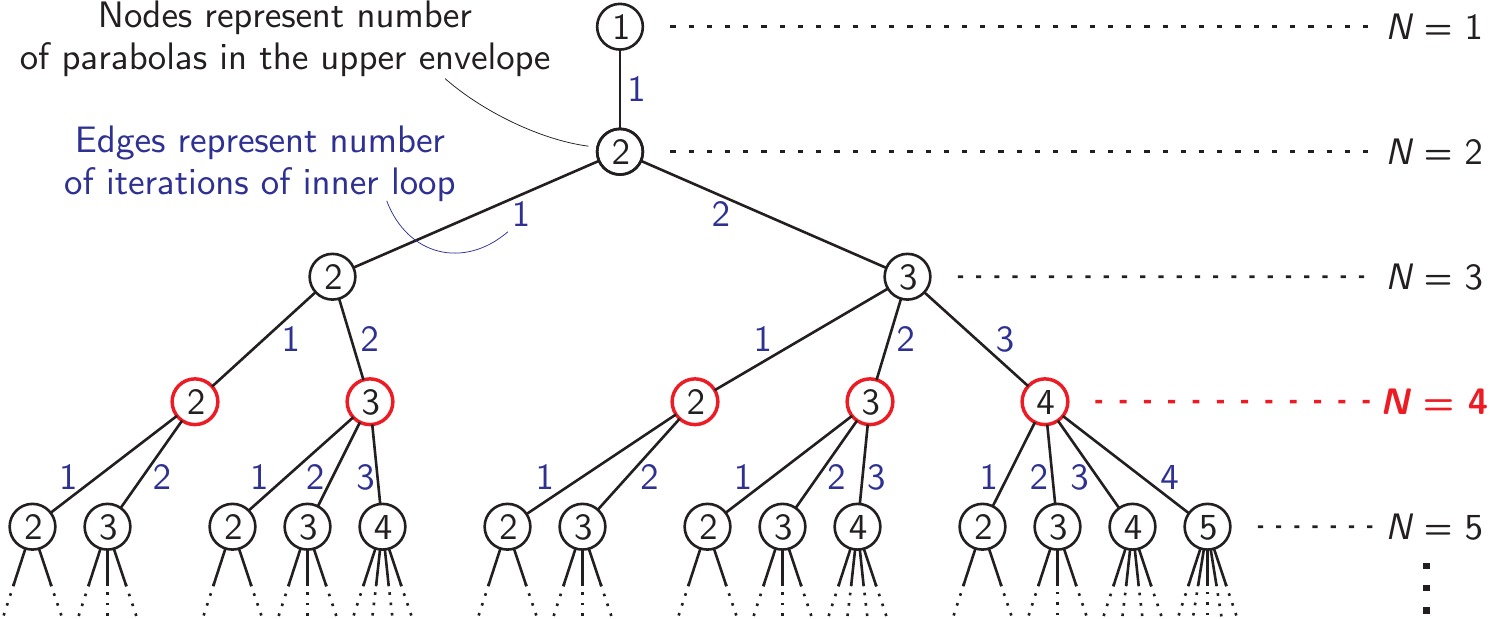}
  \caption{Tree construction to enumerate all possible number of iterations in the inner loop of the algorithm (line \ref{alg:dt:2for}), and the number of parabolas in the upper envelope.
  The nodes represent the number of parabolas in the upper envelope, the edge weights indicate the number of inner loop iterations.
  Each level in the tree represents an outer loop (line  \ref{alg:dt:1for}). At the $N$-th level of the tree, $N$ grid points have already been examined by the algorithm, therefore the number of parabolas in the upper envelope should lie between $2$ and $N$.}
  \label{fig:tree}
 \end{figure}
 To ascertain the runtime complexity of our algorithm, we consider lines 
\ref{alg:dt:1for},\ref{alg:dt:2for},\ref{alg:dt:3for} of algorithm 
\ref{alg:dt}. The loop in line \ref{alg:dt:1for} iterates once over each of the 
$N-1$ grid points. The inner loop (line \ref{alg:dt:2for}) iterates over the 
parabolas in the upper envelope until the condition in line \ref{alg:dt:if} is 
satisfied. In the {\textbf{worst case}}, the inner loop (line 
\ref{alg:dt:2for}) would always iterate through all the parabolas in the 
upper-envelope, which are bounded by $q$ in line \ref{alg:dt:1for}. This is a 
strict upper bound because the number of parabolas in the upper envelope will 
always be less than or equal to the number of grid points we have considered at 
any iteration. Therefore, the sum of inner loop iterations over the 
$N-1$ outer loop iterations is $1 + 2 + 3 + \cdots + N-1 = 
\frac{N\times(N-1)}{2}$. The loop in line \ref{alg:dt:3for} iterates once over 
each grid point $N$. The worst case complexity is therefore $O(N^2+N) = O(N^2)$.
 
Even though the worst case runtime complexity of the algorithm is the same as that of 
the brute force solution, it is significantly faster in practice. We now show 
that the average case runtime complexity of the algorithm is $O(N)$. To achieve this, 
we enumerate all possibilities that the inner loop in line \ref{alg:dt:2for} of 
the algorithm will see. Consider the tree in Figure \ref{fig:tree}. The nodes 
in the tree represent the number of parabolas in the upper envelope at any 
iteration. The edges represent the number of iterations that the inner loop 
iterates for. For instance, at the beginning of the first outer loop $N=1$, we 
have a single parabola in the upper envelope, hence the node at $N=1$ holds the 
value $1$. The inner loop can only iterate once, after which it will add a 
second parabola to the upper envelope; therefore the value of the node at $N=2$ 
is $2$. Now we have $2$ parabolas in the upper envelope. The inner loop can 
either (a) iterate once, in which case the upper envelope will be modified such 
that the parabola at the new grid point will replace the old parabola, and the 
number of parabolas in the upper envelope will remain $2$, or (b) iterate 
twice, and add one parabola to the upper envelope. We construct this tree, 
enumerating all possibilities. At any iteration $N$, assuming that each of these
enumerated situations is equally likely, we compute the average 
number of iterations of the inner loop by summing over the edge weights, and 
dividing by the number of edges. For instance, when $N=4$, the average number 
of iterations of the inner loop is $\frac{1+2+1+2+3+1+2+1+2+3+1+1+3+4}{14}$.

To compute the average number of inner loop iterations for an arbitrary $N$, we 
need to know (a) the sum of edge weights and (b), the number of edges at the 
$N$-th level of the tree. Our tree construction is closely related to a well 
known construction in combinatorial mathematics called the Catalan Family 
Tree~\citep{CatalanTree}. Using results from~\citep{CatalanTree},
(a) the number of edges at the $N$-th level of the tree is a Catalan number, 
given by $\frac{(2N)!}{N!(N+1)!}$, and (b) the sum of edges at the $N$-th 
level is {\em number of $(N+1)$-th generation vertices in the tree of sequences 
with unit increase labeled by 2}, given by $\frac{3(2N)!}{(N+2)!(N-1)!}$ 
The average number of inner loop iterations for an arbitrary $N$ is therefore 
given by $\frac{3N}{N+2}$ which is $O(1)$. The overall runtime is, therefore, 
$O(N\cdot1 + N) = O(N)$.

\subsection{Arbitrary Dimensions}
For the $2D$ grid case, we rewrite Equation 
\ref{eqn:maxdistancetransform2D} as
\begin{equation}
	D(x,y) = \max_{p} \alpha(p-x)^2 + \beta(p-x) + \max_{q} I(p,q) +
\gamma(q-y)^2 + \delta(q-y)\,.
	\label{eqn:2Dto1D}
\end{equation}
Since the last two terms do not depend on $p$, we can first solve the $1D$ grid  problem over $q$, reducing the problem to
\begin{equation}
	D(x,y) = \max_{p} \alpha(p-x)^2 + \beta(p-x) + I'(p,q)\,.
	\label{eqn:reduced1D}
\end{equation}
which is a problem on the $1D$ grid $p$.

Therefore, a maximum distance transform in $2D$ can be computed by first 
computing the maximum distance transform along a column of the grid, and then 
computing the maximum distance transform along each row of the result. This can 
be extended to arbitrary dimensions, by processing the dimensions in any order. 
Kindly note that the solution does not depend on the order in which we process 
the dimensions, as is evident from Equations \ref{eqn:2Dto1D}-\ref{eqn:reduced1D}.
\section{Applications}
The maximum distance transform can be used to maximize quadratic functions on grids of arbitrary dimensions. It can be employed, for instance, in DPMs
for finding the configurations where a score function with quadratic pairwise terms is maximized.

As described in section \ref{section:duality}, the maximum distance transform algorithm can be used to find the minimum distance
transform for downward opening parabolas. In this capacity, it can be used for efficient message passing in graph inference problems
wherever the energy function has negative quadratic pairwise terms. 

Our algorithm, and the one in \citep{Felzdt} together allow us to find the global optimal solution for any inference problem on quadratic functions.
This would allow us to have fewer constraints on our model parameters, and would lead to a better optimal solution.
\section{Conclusion}
In this work, we propose an average time $O(N)$ algorithm for maximizing quadratic functions. We give the proof of correctness 
and analyze the runtime complexity of our algorithm. Given the minimum and maximum distance transform algorithms together, we can efficiently optimise quadratic functions
of any form (without regard to the sign of the quadratic terms), and hopefully this ability will allow us more freedom in the choice of our model parameters in
optimisation problems where efficiency is indispensable.
\bibliography{dt_nips}
\bibliographystyle{unsrtnat}
\end{document}